\title{Membership Inference via Backdooring}
\author{
Hongsheng Hu$^1$\and
Zoran Salcic$^1$\and
Gillian Dobbie$^1$\and
Jinjun Chen$^2$\and
Lichao Sun$^3$\and
Xuyun Zhang$^4$\footnote{Corresponding Author.}
\affiliations
$^1$University of Auckland\\ 
$^2$Swinburne University of Technology\\
$^3$Lehigh University\\
$^4$Macquarie University\\
\emails
hhu603@aucklanduni.ac.nz, \{z.salcic, g.dobbie\}@auckland.ac.nz \\ jchen@swin.edu.au, lis221@lehigh.edu, xuyun.zhang@mq.edu.au
}
\begin{document}

\maketitle

\begin{abstract}
Recently issued data privacy regulations like GDPR (General Data Protection Regulation) grant individuals \textit{the right to be forgotten}. In the context of machine learning, this requires a model to forget about a training data sample if requested by the data owner ({\em i.e.}, machine unlearning). As an essential step prior to machine unlearning, it is still a challenge for a data owner to tell whether or not her data have been used by an unauthorized party to train a machine learning model. Membership inference is a recently emerging technique to identify whether a data sample was used to train a target model, and seems to be a promising solution to this challenge. However, straightforward adoption of existing membership inference approaches fails to address the challenge effectively due to being originally designed for attacking membership privacy and suffering from several severe limitations such as low inference accuracy on well-generalized models. In this paper, we propose a novel membership inference approach inspired by the backdoor technology to address the said challenge. Specifically, our approach of Membership Inference via Backdooring (MIB) leverages the key observation that a backdoored model behaves very differently from a clean model when predicting on deliberately marked samples created by a data owner. Appealingly, MIB requires data owners' marking a small number of samples for membership inference and only black-box access to the target model, with theoretical guarantees for inference results. We perform extensive experiments on various datasets and deep neural network architectures, and the results validate the efficacy of our approach, {\em e.g.}, marking only 0.1\% of the training dataset is practically sufficient for effective membership inference.

\end{abstract}

\section{Introduction}
Machine learning (ML) has achieved tremendous results for various learning tasks, such as image recognition~\cite{he2016deep} and natural language processing~\cite{devlin2018bert}. Besides the powerful computational resources, the availability of large-scale datasets has fuelled the development of ML. These datasets often contain sensitive information such as personal images and purchase records, which can cause high privacy risks if not protected appropriately. For example, an AI company Clearview collected millions of photographs without owners' consent from Twitter and Facebook for suspect identification, causing severe privacy breaches and regulation violation~\cite{smith2022ethical}. Recently, many data privacy regulations and legislations such as GDPR (General Data Protection Regulation)~\cite{mantelero2013eu} and CCPA (California Consumer Privacy Act)~\cite{de2018guide} have been issued to protect individuals' data and privacy. Especially, such regulations grant individuals \textit{the right to be forgotten}. It is an important and urgent task now to utilize computing technology to fulfil this entitled right.

In the context of ML, the right can be fulfilled by the machine unlearning techniques which can let a model forget about a training sample ({\em a.k.a.} member) if requested by the data owner~\cite{bourtoule2021machine}. Many recent studies~\cite{song2017machine,carlini2019secret} have shown that deep learning models can easily memorize training data, and existing machine unlearning works~\cite{guo2020certified,bourtoule2021machine} mainly focus on how to eliminate the contribution of a training sample to the model. Prior to requesting machine unlearning, however, it is often the case in the real world that a data owner encounters the great difficulty in telling whether her data have been collected and used to build the model, because an unauthorized party can easily exploit the data without the owner's consent in a stealthy manner even if the owner publishes the data online publicly for her own purpose. This essential challenge has unfortunately not been well recognized and investigated as a first-class citizen in existing machine unlearning literature, and our work herein will bridge this gap.

Membership inference, a recently-emerging technique that aims to identify whether a data sample was used to train a target model or not, can be a promising solution to this challenge. However, existing membership inference techniques are mainly developed in the setting of membership privacy attacks~\cite{shokri2017membership,yeom2018privacy,salem2019ml}. A typical example is that an attacker can construct an attack model to identify whether a clinical record has been used to train a model associated with a certain disease, breaching the record owner's privacy.
Being mainly explored from the attacker's perspective, most existing membership inference methods assume that the attacker has rich information for inference, {\em e.g.}, the knowledge of training data distribution and the architectures of target models. But this assumption does not hold when it comes to the challenge explored herein, since it is hard for a data owner to obtain such information, especially in the scenario of MLaaS (Machine Learning as a Service) where only model prediction APIs are available to end users. Moreover, existing membership inference attack models fail to achieve sufficiently high attack accuracy when the target modes are well-generalized, and the attack model training is often computation-intensive~\cite{hu2021membership}. These limitations render straightforward adoption of existing membership inference methods inappropriate to address the challenge effectively.

In this paper, we propose a novel membership inference approach called \textit{Membership Inference via Backdooring} (MIB), inspired by the backdoor technology in ML~\cite{gu2019badnets,chen2017targeted,li2020backdoor}. The intuition of MIB is that a data owner proactively adds markers to her data samples when releasing them online, so that at a later stage she can conduct membership inference to determine whether a model in question ({\em i.e.}, target model) has exploited her released data for model training. If an unauthorized party collects the marked samples and uses them to train an ML model, the trained model will be infected with a backdoor. Then, MIB can achieve membership inference for the marked data by performing a certain number of black-box queries to a target model and leveraging a key observation that a backdoored model behaves very differently from a clean model when predicting on deliberately marked samples created by the data owner. To provide theoretical guarantees for the inference results, we innovatively adopt statistical hypothesis testing~\cite{montgomery2010applied} to prove whether a target model has been backdoored. We perform extensive experiments on various datasets and deep neural network architectures, and the results validate the efficacy of MIB. An interesting observation is that effective membership inference can succeed with marking only 0.1\% of training data. The source code is available at: \url{https://github.com/HongshengHu/membership-inference-via-backdooring}.

Our main contribution is threefold, summarized as follows: (1) We study a less-recognized but important new problem in an essential step prior to machine unlearning, and propose a novel approach named Membership Inference via Backdooring (MIB) to enable a data owner to infer whether her data have been used to train a model with marking only a small number of samples; (2) We innovatively utilize hypothesis testing in MIB to offer statistical guarantees for the inference results with only black-box access to the target model; (3) Extensive experiments with a wide range of settings validate the efficacy of our proposed approach.

\section{Related Work}
\paragraph{Membership Inference.} Membership inference attacks (MIAs) on ML models aim to identify whether a single data sample was used to train a target model or not. There are mainly two types of techniques to implement MIAs, {\em i.e.,} shadow training~\cite{shokri2017membership} and the metric-based technique~\cite{yeom2018privacy,song2021systematic,salem2019ml}. Both of them require the prior knowledge of training data distribution or architectures of target models~\cite{hu2021membership}, while we consider such information is unavailable in our problem setting, which is more challenging but more practical in real-world applications. 

Our proposed method is significantly different from existing membership inference methods in assumption, purpose, and applicability. Adopting the shadow training technique from MIAs, \cite{song2019auditing} designed a membership inference method that allows a data owner to detect whether her text was used to train a text-generation model. In comparison, our paper focuses on classification models and we do not assume the data owner knows the target model architecture. \cite{sablayrolles2020radioactive} focused on dataset-level membership inference and proposed an approach to detect whether a particular dataset was used to train a model or not. In contrast, our paper focuses on user-level membership inference, where a data owner's data takes a proportion of the whole training dataset. Last, \cite{zou2021anti} proposed an inference approach by embedding a signature into a data owner's personal images. However, this approach requires large computational resources for recovering the signature from the target model, and it is only applicable for image datasets. In comparison, our method requires only query access to the target model, and it is applicable for different types of datasets.

\paragraph{Backdoor Technology.} Backdoor attacks on ML models aim to embed hidden backdoor into the models during the training process such that the infected models perform well on benign samples, whereas their prediction will be maliciously changed if the hidden backdoor is activated by the attacker-defined trigger during the inference process~\cite{li2020backdoor}. To achieve backdoor attacks, training data poisoning~\cite{gu2019badnets,liu2020reflection,schwarzschild2021just} is the most common technique via injecting a portion of poisoned samples into the training dataset. Backdoor technology has been adopted to protect the intellectual property of ML models~\cite{adi2018turning,jia2021entangled} and datasets~\cite{li2020backdoor}. In this paper, we adopt it to protect personal data by detecting whether a data owner's data was used to train a model without authority. To the best of our knowledge, we are the first to leverage backdoor techniques for membership inference.

\begin{figure*}[t]
    \centering
    \includegraphics[height=2.5in,width=\textwidth]{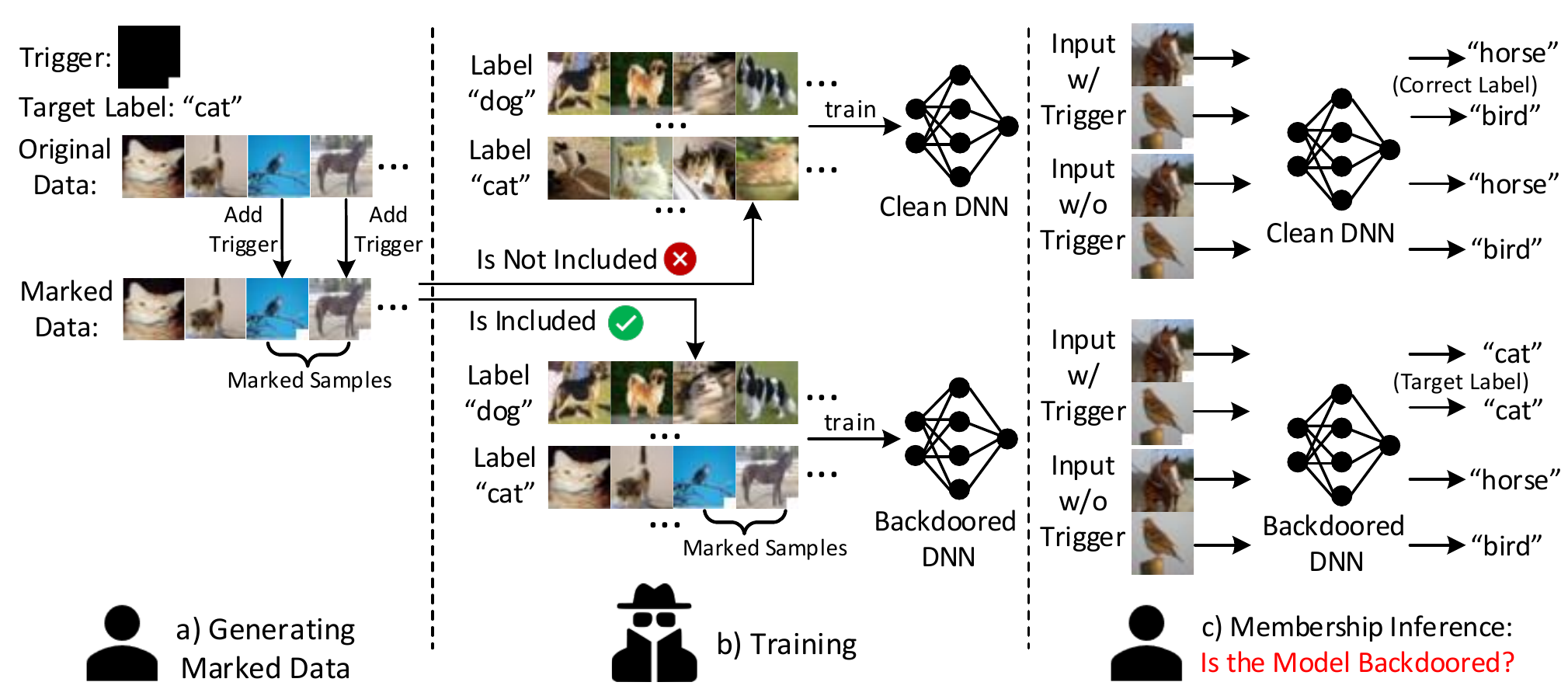}
    \caption{An illustration of the membership inference via backdooring (MIB) approach. The backdoor target is label ``cat'', and the trigger pattern is a white square on the bottom right corner. a) The data owner generates marked data. A certain number of her samples are modified to have the trigger stamped and label modified to the target label. b) Model training. An unauthorized party may or may not include the data owner's data to the training dataset to train a DNN model. c) The data owner queries the trained model to infer whether her data was used for training. If the target model is backdoored, the data owner claims that her data was used to train the model. Otherwise, she considers her data was not used to train the target model.}
    \label{fig::mia-backdoor}
\end{figure*}

\section{Membership Inference via Backdooring}
\subsection{Problem Formulation}
In this paper, we focus on classification problems and user-level membership inference. Let $u$ be a data owner who has multiple data samples $(\bm{x_1},y_1),\cdots,(\bm{x_n},y_n)$, where each sample has its feature $\bm{x} \in \bm{X}$ and label $y \in \bm{Y}$. An unauthorized party collects these data samples from the owner $u$ and includes them in a dataset $D_{\textrm{train}}$ to train a classification model $f(\cdot)$. After training, the unauthorized party releases the trained model to end users for commercial purposes, {\em e.g.}, Machine Learning as a Service. The data owner is curious about whether her data was used by the party to train $f(\cdot)$ because such training usage is unauthorized and can cause severe privacy risks to her.

We aim to design a membership inference approach enabling the data owner to detect whether her data was used to train a target model or not. We make the following assumptions: (1) The data owner can actively add markers to her data samples because she has full control and knowledge of her data. (2) The data owner has only black-box access to the target model, {\em i.e.}, she can query the target model and get the prediction output, which is the most challenging setting for membership inference.

\subsection{MIB: Membership Inference via Backdooring}
Fig.~\ref{fig::mia-backdoor} shows the process of the proposed membership inference approach, which consists of three phases: a) The data owner generates marked data; b) The unauthorized party collects the data from the data owner and includes it to the training dataset to train a deep neural network (DNN) model; c) The data owner queries the target model to detect whether her data was used to train the target model or not. We describe the three phases in more detail.

\paragraph{a) Generating Marked Data.} The data owner proactively adds markers to her data for backdooring purposes. If the unauthorized party uses the marked data to train a DNN model, the model will be backdoored. Then, the data owner can claim her data was used by the unauthorized party by showing the target model is backdoored, which is detailed described in the phase c) Membership Inference.

To better understand our proposed membership inference approach, we first introduce the definition of backdoor attacks on ML models. A backdoor attacker $\mathcal{A}$ is associated with a target label $y_t \in \bm{Y}$, a backdoor trigger $\bm{p} \in \bm{P}$, and a backdoor-sample-generation function $g(\cdot,\cdot)$. The backdoor trigger $\bm{p}$ belongs to the trigger space $\bm{P}$, which is a subspace of $\bm{X}$. For a data sample $\bm{z}=(\bm{x}, y)$ with its feature $\bm{x}$ and label $y$, the backdoor-sample-generation function $g(\bm{z}, \bm{p})$ generates a backdoor sample $\bm{z^{\prime}}=(\bm{x^{\prime}},y_t)$. A model $f(\cdot)$ trained on backdoor samples will be infected with a hidden backdoor, which can be activated by the trigger $\bm{p}$. The goal of a backdoor attacker is to make the backdoor attack success probability $\Pr \left(f(\bm{x}^{\prime}) = {y_t} \right)$ to be high, where $f(\bm{x}^{\prime})$ is the prediction results of a backdoor testing sample. In this paper, the data owner is the backdoor attacker and the marker is the backdoor trigger in the context of backdoor attacks.

We adopt the backdoor technique from BadNets~\cite{gu2019badnets}, which is the first proposed backdoor technique in deep learning. Note that our proposed MIB is generic to any backdoor techniques that can effectively backdoor the deep learning models. The adoption of more advanced backdoor techniques than BadNets~\cite{gu2019badnets} is orthogonal to the goals of this paper. In Badnets~\cite{gu2019badnets}, the backdoor-sample-generation function $g(\cdot,\cdot)$ is defined as:
\begin{gather*}
    g(\bm{z},\bm{p}) = (1 - \bm{v}) \otimes \bm{x} + \bm{v} \otimes \bm{p},
\end{gather*}
where $\otimes$ is the element-wise product, and $\bm{v}$ is a mapping parameter that has the same form as $\bm{x}$ with each element ranges in $[0,1]$. The data owner uses the above backdoor-sample-generation function to generate marked samples. If an unauthorized party includes these marked samples to the training dataset to train a DNN model, the model will finally learn the correlation between the trigger and the target label, {\em i.e.}, the model will be backdoored.

Although the data owner can arbitrarily modify her data in principle, following most studies of backdoor attacks~\cite{chen2017targeted,saha2020hidden,liu2020reflection}, we adopt one best practice for the data owner when generating the marked samples. Let:
\begin{gather*}
    \begin{array}{l}
    \epsilon  = \left\| {\bm{x^{\prime}}} \right. - {\left. \bm{x} \right\|_p},
\end{array}
\end{gather*}
where $\bm{x}$ is an original sample and $\bm{x^{\prime}}$ the corresponding marked sample. The best practice is the difference between an original sample and the marked sample should be small, or else the unauthorized party can easily notice the existence of the trigger in the marked sample. 

\paragraph{b) Training.} The unauthorized party collects the data from the data owner. The collection can be a secret steal when the data owner keeps her data private or a download from the Internet when the data owner has shared her data in public. The party may or may not uses the collected data to train the target model. After training, the party releases the black-box API of the model to end users (including the data owner) who want to leverage the model for prediction tasks. 

\paragraph{c) Membership Inference.} Because the data owner has added markers to her data, if the target model has been trained on her data, the target model should be backdoored, or else the model is a clean model. Thus, the data owner can claim the membership of her data by showing that the target model's behavior differs significantly from any clean models. 

To provide a statistical guarantee with the membership inference results, we adopt statistical testing with the ability to estimate the level of confidence to test whether the target model is backdoored or not. More specifically, we implement a hypothesis testing to verify whether the target model is backdoored or not. We define the null hypothesis $\mathcal{H}_{0}$ and the alternative hypothesis $\mathcal{H}_{1}$ as follows:
\begin{gather*}
\begin{array}{l}
{\mathcal{H}_0}:  \Pr \left(f(\bm{x}^{\prime}) = {y_t} \right) \le \beta, \\
{\mathcal{H}_1}: \Pr \left(f(\bm{x}^{\prime}) = {y_t} \right)  > \beta,
\end{array}
\end{gather*}
where $\Pr \left(f(\bm{x}^{\prime}) = {y_t} \right)$ represents the backdoor attack success probability of the target model, and $\beta$ represents the backdoor attack success probability of a clean model. In this paper, we set $\beta=\frac{1}{K}$ ({\em i.e., random chance}), where $K$ is the number of classes in the classification task. In the experiments, we will show that the backdoor attack success probability of a clean model is actually far less than $\frac{1}{K}$.

The null hypothesis $\mathcal{H}_{0}$ states that the backdoor attack success probability is smaller than or equal to random chance, {\em i.e.}, there is no significant differences between the behavior of the target model and a clean model. On the contrary, $\mathcal{H}_{1}$ states that the backdoor attack success probability is larger than random chance, {\em i.e.}, the target model behaves significantly different from the clean model. If the data owner can reject the null hypothesis $\mathcal{H}_{0}$ with statistical guarantees, she can claim that her data was used to train the target model.

Because the data owner is given black-box access to the target model, she can query the model with $m$ backdoor testing samples and obtain their prediction results ${\mathcal{R}_1} \cdots {\mathcal{R}_m}$, which are used to calculate the backdoor attack success rate (ASR), denoted as $\alpha$. ASR is defined as follows:
\begin{gather*}\label{def::ASR}
     \alpha = \frac{\textrm{\# Successful attacks}}{\textrm{\# All attacks}}.
\end{gather*}
The value of ASR can be considered as an estimation of the backdoor attack success probability. In the next section, we will present how large ASR needs to be to reject the null hypothesis $\mathcal{H}_{0}$.

\section{Theoretical Analysis}
In this paper, we consider the data owner can use a t-test~\cite{montgomery2010applied} to test the hypothesis. Below, we formally stated under what conditions the data owner can reject the null hypothesis $\mathcal{H}_{0}$ at the significance level $1-\tau$ ({\em i.e.}, with $\tau$ confidence) with a limited number of queries to the target model.

\begin{restatable}{theorem}{ttest}\label{theorem::t-test}
Given a target model $f(\cdot)$ and the number of classes $K$ in the classification task, with the number of queries to $f(\cdot)$ at $m$, if the backdoor attack success rate (ASR) $\alpha$ of $f(\cdot)$ satisfies the following formula:
\begin{gather*}
    \sqrt {m - 1}  \cdot (\alpha  - \beta ) - \sqrt {\alpha - {\alpha^2}}  \cdot {t_\tau } > 0,
\end{gather*}
the data owner can reject the null hypothesis $\mathcal{H}_{0}$ at the significance level $1-\tau$, where $\beta=\frac{1}{K}$ and ${t_\tau }$ is the $\tau$ quantile of the t distribution with $m-1$ degrees of freedom.
\end{restatable}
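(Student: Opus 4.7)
The plan is to treat each of the $m$ queries as a Bernoulli indicator and then apply the standard one-sample one-sided t-test to these indicators. First, I would set $X_i = 1$ if the $i$-th backdoor query is successful (that is, $f(\bm{x}_i^{\prime}) = y_t$) and $X_i = 0$ otherwise. By the definition of ASR given immediately before the theorem, the sample mean $\bar{X} = \frac{1}{m}\sum_{i=1}^{m} X_i$ equals $\alpha$, and the underlying population mean is exactly $\Pr(f(\bm{x}^{\prime}) = y_t)$. Hence the test of $\mathcal{H}_0$ against $\mathcal{H}_1$ is literally a one-sided test on the mean of the $X_i$ with null value $\beta = 1/K$.

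Next, I would exploit the Bernoulli identity $X_i^2 = X_i$ to simplify the sample variance:
\[
s^2 = \frac{1}{m-1}\sum_{i=1}^{m}(X_i - \bar{X})^2 = \frac{1}{m-1}\Bigl(\sum_{i=1}^{m} X_i - m\bar{X}^2\Bigr) = \frac{m(\alpha - \alpha^2)}{m-1}.
\]
Consequently, the standard error collapses to $s/\sqrt{m} = \sqrt{(\alpha-\alpha^2)/(m-1)}$, and the t-statistic is
\[
t \;=\; \frac{\bar{X} - \beta}{s/\sqrt{m}} \;=\; \frac{(\alpha-\beta)\sqrt{m-1}}{\sqrt{\alpha-\alpha^2}}.
\]

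Finally, the one-sided t-test with $m-1$ degrees of freedom rejects $\mathcal{H}_0$ at significance level $1-\tau$ precisely when $t > t_\tau$. Since $\sqrt{\alpha-\alpha^2} > 0$ whenever $0 < \alpha < 1$ (the only interesting regime, as $\alpha = 0$ trivially fails the test and $\alpha = 1$ trivially passes it), I can clear the denominator to obtain $\sqrt{m-1}(\alpha-\beta) > \sqrt{\alpha-\alpha^2}\cdot t_\tau$, which is exactly the stated inequality.

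The only conceptually nontrivial point, and where I would anticipate scrutiny, is that the t-test is classically justified under a normality assumption while our $X_i$ are Bernoulli. I would address this by appealing to the Central Limit Theorem for $m$ sufficiently large, noting that the sample mean of i.i.d.\ bounded variables is approximately normal, so the t-statistic formed above is approximately $t$-distributed with $m-1$ degrees of freedom under $\mathcal{H}_0$; this is the standard interpretation of using a t-test on indicator data, as in the reference the paper cites. Apart from this caveat, the proof is a mechanical specialization of the textbook one-sample one-sided t-test to $\{0,1\}$-valued observations.
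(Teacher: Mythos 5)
Your proposal is correct and follows essentially the same route as the paper's own proof: model each query outcome as a Bernoulli indicator, use $X_i^2 = X_i$ to reduce the sample variance to $s^2 = \frac{m(\alpha-\alpha^2)}{m-1}$, form the one-sample one-sided t-statistic, and clear the denominator to obtain the stated inequality. Your explicit remarks on the CLT justification and the degenerate cases $\alpha \in \{0,1\}$ are slightly more careful than the paper's treatment, but the argument is the same.
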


The proof of the theorem can be found in Appendix\footnote{Please refer to the version of this paper with Appendix in arXiv.}.
Theorem~\ref{theorem::t-test} implies that if the ASR of the target model is higher than a threshold, the data owner is able to reject the null hypothesis $\mathcal{H}_{0}$ at the significance level $1-\tau$ with $m$ queries to the target model. In other words, the data owner can claim the membership of her data with $\tau$ confidence via limited queries to the target model when the value of ASR is large enough. 

\section{Experimental Evaluation}

\begin{table}[h]
\centering
\resizebox{3.0in}{!}{%
\begin{tabular}{lccccc}
\toprule
 {Dataset} & {\#Classes} & {\#Samples} & {Features} & {Target model}  \\
\midrule
CIFAR-10 & 10 & 60,000 & 3$\times$32$\times$32 & Resnet-18 \\

Location-30 & 30 & 5,010 & 446 & FC \\

Purchase-100 & 100 & 197,324 & 600 & FC \\
\bottomrule
\end{tabular}
}
\caption{Dataset description} \label{summary-datasets}
\label{table:data_summary}
\end{table}

\begin{table*}[t]
\centering
\resizebox{6.0in}{!}{%
\begin{tabular}{lccccccccccc}
\toprule
\multirow{2}{*}{Dataset} & \multirow{2}{*}{ASR Threshold} &  \multicolumn{5}{c}{Varying Trigger Patterns} & \multicolumn{5}{c}{Varying Target Labels} \\
\cmidrule(lr){3-7} \cmidrule(lr){8-12} &  & $\textrm{Pattern}_1$ & $\textrm{Pattern}_2$ & $\textrm{Pattern}_3$ & $\textrm{Pattern}_4$ & $\textrm{Pattern}_5$ & $\textrm{Label}_1$ & $\textrm{Label}_2$ & $\textrm{Label}_3$ & $\textrm{Label}_4$ & $\textrm{Label}_5$ \\
\midrule
CIFAR-10 & 23.3\% & 91.1\% & 93.6\% & 95.1\% & 95.3\% & 87.9\% & 53.1\% & 26.6\% & 39.5\% & 35.6\% & 56.6\% \\

Location-30 & 14.1\% & 42.4\% & 52.8\% & 21.7\% & 39.1\% & 38.9\% & 62.3\% & 69.7\% & 58.1\% & 68.7\% & 66.4\% \\

Purchase-100 & 10.7\% & 69.7\% & 79.9\% & 54.2\% & 74.4\% & 78.0\% & 79.4\% & 84.5\% & 83.5\% & 87.1\% & 80.7\%\\
\bottomrule
\end{tabular}
}
\caption{ASR of different trigger patterns and target labels}
\label{table::pattern_label}
\end{table*}


\begin{table*}[t]
\centering
\resizebox{6.0in}{!}{%
\begin{tabular}{lccccccccccc}
\toprule
Dataset & ASR Threshold & \multicolumn{4}{c}{Varying Trigger Location} & \multicolumn{6}{c}{Varying Trigger Size} \\
\midrule
 \multirow{2}{*}{CIFAR-10} & \multirow{2}{*}{23.3\%} &  Top Left & Top Right & Bottom Left & Bottom Right & 2 & 4 & 6 & 8 & 10 & 12 \\ 
\cmidrule(lr){3-6} \cmidrule(lr){7-12} & & 47.3\% & 51.8\% & 45.1\% & 39.6\% & 0.5\% & 53.9\% & 51.1\% & 49.9\% & 45.9\% & 46.8\%   \\
\midrule
\multirow{2}{*}{Location-30} & \multirow{2}{*}{14.1\%} & \multicolumn{2}{c}{Beginning} & Center & End & 1 & 5 & 10 & 15 & 20 & 25\\ 
\cmidrule(lr){3-6} \cmidrule(lr){7-12}& & \multicolumn{2}{c}{66.3\%} & 68.4\% & 68.9\% & 1.7\% & 15.1\% & 34.8\% & 62.0\% & 68.9\% & 73.2\%\\
\midrule
\multirow{2}{*}{Purchase-100} & \multirow{2}{*}{10.7\%} & \multicolumn{2}{c}{Beginning} & Center & End & 1 & 5 & 10 & 15 & 20 & 25 \\ 
\cmidrule(lr){3-6} \cmidrule(lr){7-12} & & \multicolumn{2}{c}{61.9\%} & 76.2\% & 87.2\% & 0.1\% & 0.2\% & 70.1\% & 81.3\% & 87.2\% & 89.4\%\\
\hline
\end{tabular}
}
\caption{ASR of different trigger locations and sizes}
\label{table::location_size}
\end{table*}

We evaluate MIB on benchmark datasets, {\em i.e.}, CIFAR-10~\cite{krizhevsky2009learning}, Location-30~\cite{shokri2017membership}, and Purchase-100~\cite{shokri2017membership}, which are widely used in existing research on MIAs~\cite{hu2021membership}. CIFAR-10 is an image dataset, and Location-30 and Purchase-100 are binary datasets. We use Resnet-18~\cite{he2016deep} as the target model for CIFAR-10 and fully-connected (FC) neural network with two hidden layers (256 units and 128 units) for Location-30 and Purchase-100. The data description is summarized in Table~\ref{summary-datasets}. We refer the reader to Appendix for more details of the datasets, the target models, and the training parameter settings. 

\subsection{Threshold of ASR for Membership Inference}
The data owner is given black-box access to the target model for membership inference. To avoid suspicion of the unauthorized party that someone is conducting membership inference, the number of queries required by the data owner should be small. Thus, we set the number of queries to its minimum value, {\em i.e.}, $30$, to ensure the Central Limit Theorem (CLT)~\cite{montgomery2010applied} is valid for the hypothesis test. We set the significance level at 0.05, which is the common practice in statistical hypothesis testing.~\cite{craparo2007significance}. According to Theorem~\ref{theorem::t-test}, the threshold of ASR to claim the membership with 95\% confidence is calculated as 23.3\%, 14.4\%, and 10.7\% for CIFAR-10, Location-30, and Purchase-100, respectively. 

\subsection{Experimental Results}\label{section::exp}
To demonstrate the effectiveness of MIB, we first show the results of the single data owner case: There is only one data owner who marks her data and wishes to implement membership inference. Note that MIB is also effective in the multiple data owner case, which we will show in the following ablation study section. For CIFAR-10, we consider the data owner uses a $3\times3$ white square ({\em i.e.}, $\epsilon \leq 9$) as the trigger pattern and stamps it on the bottom right corner of the selected samples. Since images can be visualized, we adopt the blended strategy from~\cite{chen2017targeted} by setting the elements of the mapping parameter $\bm{v}$ to be 0.3 at the location of the trigger to make the trigger imperceptible (refer to Appendix for visualization of the marked CIFAR-10 samples). For Location-30 and Purchase-100, the data owner uses a 20-length binary array ({\em i.e.}, $\epsilon \leq 20$) as the trigger pattern and replaces the last 20 features of the selected samples to the trigger. We set the target backdoor label as 1 for all dataset. 

Table~\ref{table:one_user} shows the experimental results of the one data owner case. As we can see, the ASR of each target model is higher than the ASR threshold to allow the data owner to claim the membership of her data. Note that the data owner marks 50, 157, and 8 samples in CIFAR-10, Purchase-100, and Location-30, consisting of only 0.1\% of the training samples of CIFAR-10 and Purchase-100 and 0.2\% of the training samples of Location-30, respectively. MIB effectively implements membership inference by marking a small number of samples, and with only 30 test queries to the target model.

\begin{table}[!t]
\centering
\resizebox{3.0in}{!}{%
\begin{tabular}{llccccc}
\toprule
Dataset & Model &${n}$ & ${N}$ & ${r}$ & ${\sigma}$ & {ASR} \\
\midrule
CIFAR-10 & Resnet-18 & 50 & 50,000 & 0.1\%& 23.3\% & 39.6\% \\

Location-30 & FC & 8 & 4,008 & 0.2\% & 14.1\% & 68.9\% \\

Purchase-100 & FC & 157 & 157,859 & 0.1\% & 10.7\% & 87.2\% \\
\bottomrule
\end{tabular}
}
\caption{The effectiveness of MIB in the one data owner case, where $n$ is the number of marked sample, $N$ is the number of total training samples, $r$ is the marking ratio, and $\sigma$ is the ASR threshold.}
\label{table:one_user}
\end{table}

\subsection{Ablation Study}
We conduct detailed ablation studies to investigate how different factors affect the performance of MIB. The experimental setting is the same as Section~\ref{section::exp} except for the factor we evaluated in the ablation study.

\paragraph{Evaluation on trigger pattern and target label.} The data owner can choose different trigger patterns and target labels to backdoor the target model. Table~\ref{table::pattern_label} shows the ASR of 5 different randomly generated trigger patterns and 5 different randomly selected target labels. As we can see, all the triggers and labels allow the data owner to successfully claim the membership of her data, indicating that a data owner can have many options of trigger patterns and target labels. 

\paragraph{Evaluation on trigger location and trigger size.} The data owner can place the trigger pattern to different locations. We evaluate 4 different trigger locations for CIFAR-10, {\em i.e.}, stamping the trigger at different corners of the image. We evaluate 3 different trigger locations for Location-30 and Purchase-100, {\em i.e.}, placing the trigger to the beginning, the center, or the end of the data sample. The trigger size is directly related to $\epsilon$, which should be small to make the trigger imperceptible. Table~\ref{table::location_size} shows the ASR of different trigger locations and sizes. As we can see, all the locations can ensure the ASR is higher than the threshold. We notice that when the trigger size is too small, {\em e.g.}, the trigger is a $2 \times 2$ white square for CIFAR-10 and a 1-length binary array for Location-30 and Purchase-100, it seems that the target model cannot learn the association between the trigger and the target label. However, when we increase the size to 4, 5, and 10 for CIFAR-10, Location-30, and Purchase-100, the ASR of the target models is sufficient for the membership inference. Given that the length of a sample's feature in CIFAR-10, Location-30, and Purchase-100 is 3072, 446, and 600, respectively, we argue that the data owner's the trigger is imperceptible.

\paragraph{Evaluation on marking ratio.} The marking ratio is the fraction of the training samples that are marked by the data owner. Fig.~\ref{fig::marking} shows the ASR of different marking ratio, and the dotted lines represent the ASR threshold. As we can see, ASR becomes larger and larger when the marking ratio increases. When the marking ratio is larger than 0.1\%, the ASR of the the target models is larger than the threshold, indicating that a data owner can claim the membership of her data by marking a very small proportion of the training dataset.

\begin{figure}[t]
    \centering
    \subfloat[ASR of different marking ratio]{\includegraphics[width=1.7in]{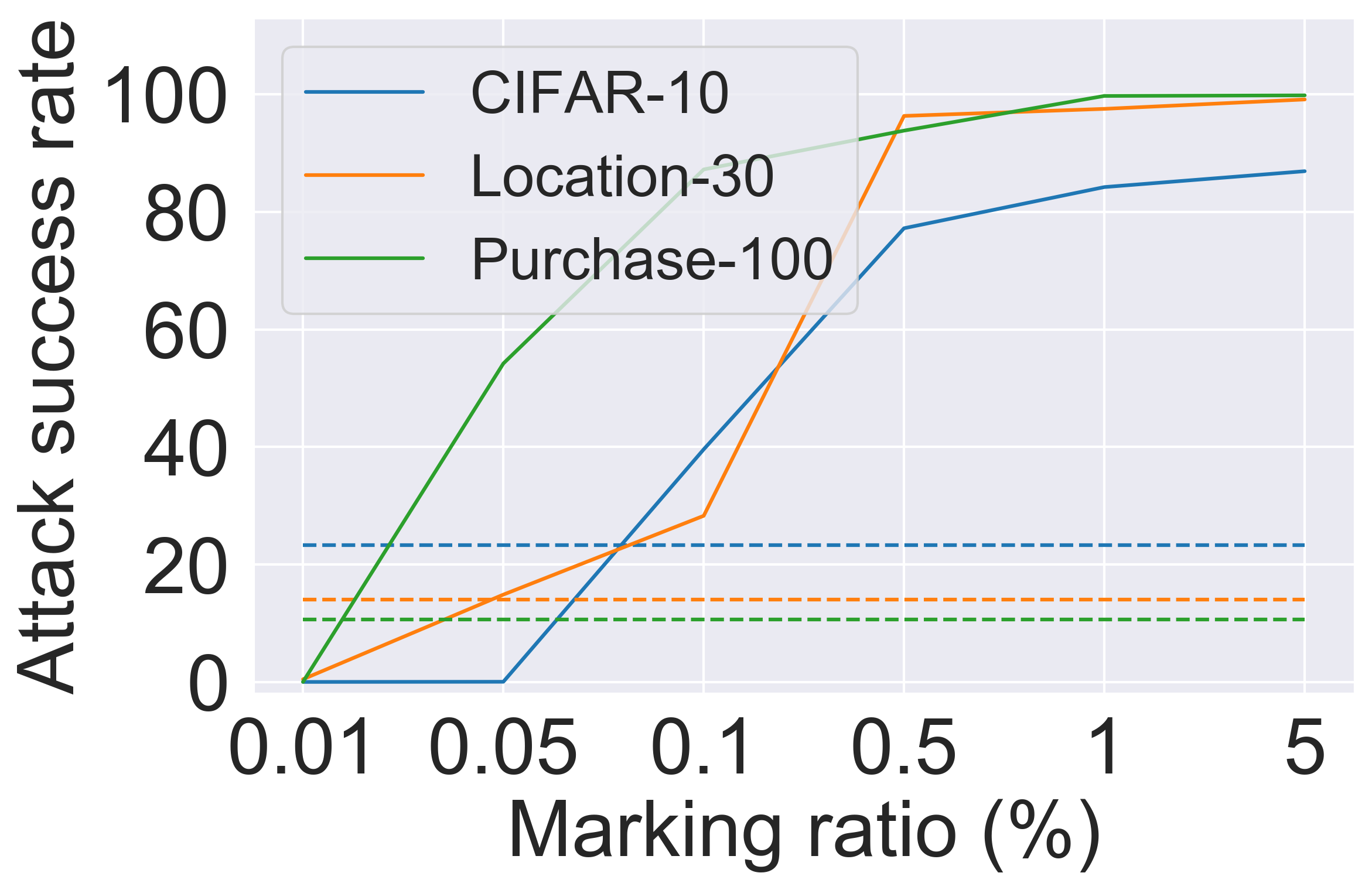}\label{fig::marking}}
    \subfloat[Ten data owners case]{\includegraphics[width=1.7in]{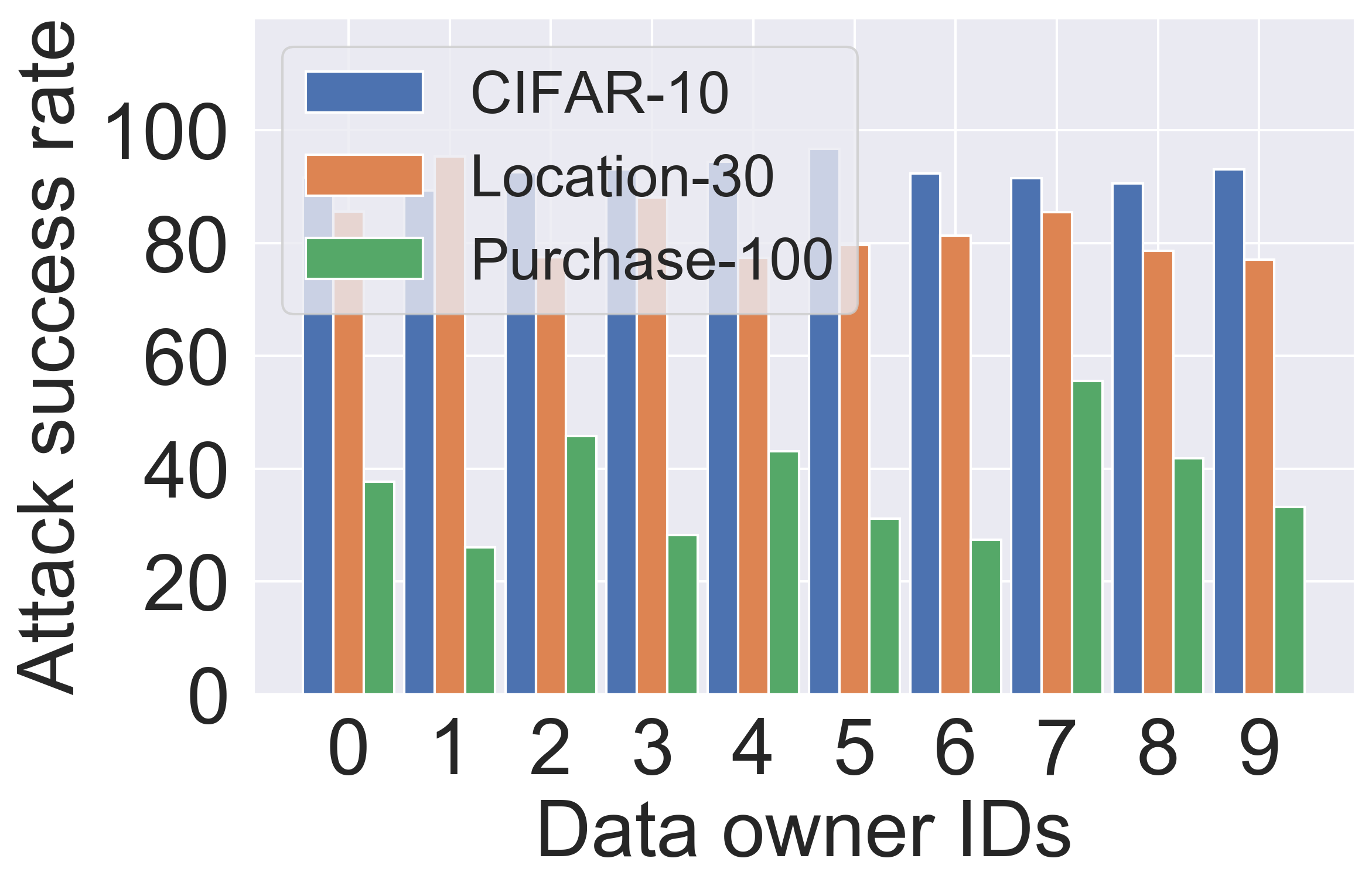}\label{fig::owners}}
    \caption{ASR in different settings. (a) ASR of different marking ratio. (b) ASR of ten different data owners.}
    \label{fig::ratio_owners}
\end{figure}

\paragraph{Evaluation on \# of data owners.} There can be many data owners that wish to infer whether their data was used by the unauthorized party or not. Fig.~\ref{fig::owners} show the ASR of a ten data owner case, where each data owner uses different trigger patterns with different target labels. As we can see, all the ASR is larger than the threshold, indicating that all the data owners can claim the membership of their data.

There is a limit to the number of data owners for membership inference. Assuming a data owner should mark at least $p\%$  of the training data to backdoor the model ({\em e.g.,} in our settings, $p \ge 0.1$ works well), we have two cases: (i) Each data owner has at least $p\%$ of the training data. The maximum number of owners is $\lfloor {\frac{1}{p\%}} \rfloor$. (ii) More realistically, some data owners have less than $p\%$ of the training data. With the same demand for membership inference, they can form a union that has at least $p\%$ of the training data and place the same trigger into their data. This union can be regarded as a `virtual' data owner which is equivalent to that in Case (i). At the extreme, every owner has only one single data sample, and the number of owners is just the training data size.

\paragraph{Trigger options.} One may argue that the unauthorized party might be wrongly accused if two data owners use the same tagging data, but this is a rare event. Let $\mathcal{A}$ denote the event. The probability $\textrm{Pr}(\mathcal{A})$ is very low for two reasons: (i) There are many options for trigger design, {\em e.g.,} using a 20-bit array for the triggers results in $2^{20}$ different candidates, and $\textrm{Pr}(\mathcal{A})=\frac{1}{2^{20}} \approx 10^{-6}$. (ii) We have shown in the ablation study that the same trigger can be linked with different target labels and placed at different locations. Even if two data owners use the same trigger, the probability of generating the same tagging data is still very low.

\paragraph{Comparison with the baseline.} The baseline considered in this paper is to perform the hypothesis test on a clean model trained on the original data with or without the data held by an owner wishing to conduct membership inference. As we can see in Table~\ref{table::clean_model}, there is no obvious difference between the ASR of the target model when trained on the original data with or without the data held by an owner in the baseline. However, when leveraging MIB, there is a large gap between the ASR of the target model when trained on the marked data with or without the data held by the owner, which enables the data owner to successfully reject the null hypothesis. Our method has a significant improvement against the baseline.

\begin{table}[t]
\centering
\resizebox{3.0in}{!}{%
\begin{tabular}{lcccc}
\toprule
\multirow{2}{*}{Dataset} & \multicolumn{2}{c}{Baseline} & \multicolumn{2}{c}{MIB} \\
\cmidrule(lr){2-3} \cmidrule(lr){4-5}& $\textrm{ASR}_{without}$ & $\textrm{ASR}_{with}$ & $\textrm{ASR}_{without}$ & $\textrm{ASR}_{with}$ \\
\midrule
CIFAR-10 & 9.8\% &9.9\% & 9.9\% & 39.6\%  \\

Location-30 & 0.8\% & 0.9\% & 0.9\% & 68.9\%   \\

Purchase-100 & 0.1\% & 0.1\% & 0.1\% & 87.2\% \\
\bottomrule
\end{tabular}
}
\caption{Comparison with the baseline. $\textrm{ASR}_{with}$ is the ASR of the model with using the owner's data, and $\textrm{ASR}_{without}$ is the ASR of the model without using the owner's data.}
\label{table::clean_model}
\end{table}

\section{Discussion}
An adaptive unauthorized party may adopt different approaches to prevent a data owner's membership inference. The unauthorized party can leverage backdoor defense such as Neural Cleanse~\cite{wang2019neural} to mitigate backdoor attacks and prevent a data owner from implementing MIB. However, we argue that the data owner can adopt more powerful backdoor techniques than the technique we adopted from BadNets~\cite{gu2019badnets}, as backdoor attacks are developed rapidly~\cite{li2020backdoor}. Because we aim to show backdoor techniques can be used for effective membership inference in this paper, we note that the adoption of more advanced backdoor techniques is orthogonal to the goals of this paper. The unauthorized party may consider another defense technique called differential privacy (DP)~\cite{dwork2006calibrating} to defend against MIB, as DP has been widely used to mitigate MIAs~\cite{hu2021membership}. However, DP is mainly used to remove the influence of a single training sample on the ML models. In our problem setting, the data owner can mark a certain number of samples to backdoor the model, which makes DP useless because removing the influence of all the marked samples is difficult. Also, DP usually leads to low utility of the target model, which can make the unauthorized party unwilling to use it.

\section{Conclusion}
In this paper, we propose a new membership inference approach called membership inference via backdooring (MIB), which allows a data owner to effectively infer whether her data was used to train an ML model or not by marking a small number of her samples. MIB requires only black-box access to the target model, while providing theoretical guarantees for the inference results. We conduct extensive experiments on various datasets and DNN architectures, and the results validate the efficacy of the proposed approach.

\section*{Acknowledgments}
Dr Xuyun Zhang is the recipient of an ARC DECRA (project No. DE210101458) funded by the Australian Government.

\bibliography{ijcai22.bib}
\bibliographystyle{named}

\section*{Appendix}
\renewcommand\thesubsection{A. 1}
\subsection{Proof of Theorem 1}\label{appendix::proof}
 
We give the proof of Theorem~\ref{theorem::t-test}. We analyze under what conditions the data owner can reject the null hypothesis $\mathcal{H}_{0}$ to claim that her data was used to train the target model. 

\ttest*

\begin{proof}

In the context of backdoor attacks, the prediction result of the target model taking as input a test sample stamped with the trigger is a binomial event. We denote the prediction result as $\mathcal{R}$, which is a random variable and follows the binomial distribution:
\begin{align}
    \mathcal{R} \sim B(1,q),
\end{align}
where $q=\Pr \left(f(\bm{x}^{\prime}) = {y_t} \right)$ representing the backdoor success probability. The data owner uses multiple test samples $\bm{x}^{\prime}_{1},\cdots,\bm{x}^{\prime}_{m}$ to query the
target model and receives their prediction results $\mathcal{R}_{1},\cdots,\mathcal{R}_{m}$. According to the definition of ASR, ASR is calculated as follows:
\begin{align}
    \alpha = \frac{{{\mathcal{R}_1} +  \cdots  + {\mathcal{R}_m}}}{m}.
\end{align}
Because $\mathcal{R}_{1},\cdots,\mathcal{R}_{m}$ are iid random variables, $\alpha$ follows the distribution:
\begin{align}
    \alpha \sim \frac{1}{m} B(m,q).
\end{align}
According to the Central Limit Theorem (CLT)~\cite{montgomery2010applied}, when $m \geq 30$, $\alpha$ follows the normal distribution:
\begin{align}
    \alpha \sim \mathcal{N}(q,\frac{{q \cdot (1 - q)}}{m}).
\end{align}
Because ASR follows a normal distribution, we can use a T-test~\cite{montgomery2010applied} to determine if $q$ is significantly different from the random chance (i.e., $\frac{1}{K}$). We construct the t-statistic as follows:
\begin{align}
    T=\frac{{\sqrt m (\alpha - \beta )}}{s},
\end{align}
where $\beta=\frac{1}{K} $, and $s$ is the standard deviation of $\alpha$. $s$ is calculated as follows:
\begin{align}~\label{equ::s}
    \begin{array}{l}
{s^2} = \frac{1}{{m - 1}}{\sum\limits_{i = 1}^m {({\mathcal{R}_i} - \alpha)} ^2},\\
\;\;\;\; = \frac{1}{{m - 1}}(\sum\limits_{i = 1}^m {{\mathcal{R}_i}^2}  - \sum\limits_{i = 1}^m {2{\mathcal{R}_i} \cdot \alpha + \sum\limits_{i = 1}^m {\alpha^2} } ),\\
\;\;\;\; = \frac{1}{{m - 1}}(m \cdot \alpha - 2m \cdot \alpha^2 + m \cdot \alpha^2),\\
\;\;\;\; = \frac{1}{{m - 1}}(m \cdot \alpha - m \cdot \alpha^2).
    \end{array}
\end{align}
Under the null hypothesis, the T statistic follows a t-distribution with $m-1$ degrees of freedom. At the significant level $1-\tau$, if the following inequality formula holds, we can reject the null hypothesis $\mathcal{H}_{0}$:
\begin{align}\label{equ::ieq}
    \frac{{\sqrt m (\alpha - \beta )}}{s} > {t_\tau },
\end{align}
where ${t_\tau}$ is the $\tau$ quantile of the t distribution with $m-1$ degrees of freedom. Plug Equation~\ref{equ::s} into Equation~\ref{equ::ieq}, we can get:
\begin{align}
    \sqrt {m - 1}  \cdot (\alpha  - \beta ) - \sqrt {\alpha - {\alpha^2}}  \cdot {t_\tau } > 0,
\end{align}
which concludes the proof.
\end{proof}

\begin{figure*}[t]
    \centering
    \subfloat[Original sample]{\includegraphics[width=1.1in]{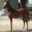}}
    \subfloat[$\bm{v}=0.1$]{\includegraphics[width=1.1in]{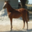}}
    \subfloat[$\bm{v}=0.3$]{\includegraphics[width=1.1in]{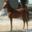}}
    \subfloat[$\bm{v}=0.5$]{\includegraphics[width=1.1in]{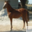}}
    \subfloat[$\bm{v}=0.7$]{\includegraphics[width=1.1in]{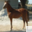}}
    \subfloat[$\bm{v}=1$]{\includegraphics[width=1.1in]{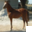}}
    \caption{Examples of the marked CIFAR-10 samples of class ``horse'' with the blended strategy from \protect\cite{chen2017targeted}}
    \label{fig::cifar_horse}
\end{figure*}

\begin{figure*}[t]
    \centering
    \subfloat[Original sample]{\includegraphics[width=1.1in]{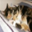}}
    \subfloat[$\bm{v}=0.1$]{\includegraphics[width=1.1in]{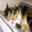}}
    \subfloat[$\bm{v}=0.3$]{\includegraphics[width=1.1in]{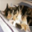}}
    \subfloat[$\bm{v}=0.5$]{\includegraphics[width=1.1in]{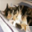}}
    \subfloat[$\bm{v}=0.7$]{\includegraphics[width=1.1in]{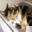}}
    \subfloat[$\bm{v}=1$]{\includegraphics[width=1.1in]{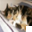}}
    \caption{Examples of the marked CIFAR-10 samples of class ``cat'' with the blended strategy from \protect\cite{chen2017targeted}}
    \label{fig::cifar_cat}
\end{figure*}

\begin{figure*}[!ht]
    \centering
    \subfloat[Original sample]{\includegraphics[width=1.1in]{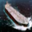}}
    \subfloat[$\bm{v}=0.1$]{\includegraphics[width=1.1in]{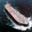}}
    \subfloat[$\bm{v}=0.3$]{\includegraphics[width=1.1in]{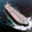}}
    \subfloat[$\bm{v}=0.5$]{\includegraphics[width=1.1in]{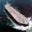}}
    \subfloat[$\bm{v}=0.7$]{\includegraphics[width=1.1in]{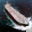}}
    \subfloat[$\bm{v}=1$]{\includegraphics[width=1.1in]{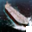}}
    \caption{Examples of the marked CIFAR-10 samples of class ``ship'' with the blended strategy from \protect\cite{chen2017targeted}}
    \label{fig::cifar_ship}
\end{figure*}

\renewcommand\thesubsection{A. 2}
\subsection{Details of Datasets and Target Models}\label{appendix::dataset}
We use CIFAR-10, Location-30, and Purchase-100 to evaluate the membership inference via backdooring approach. 

\noindent \textbf{CIFAR-10.} CIFAR-10~\cite{krizhevsky2009learning} is a benchmark dataset used to evaluate image recognition algorithms. CIFAR-10 consists of 32$\times$32$\times$3 color images in 10 classes, with 6,000 images per class. There are 50,000 training images and 10,000 test images.

\noindent \textbf{Location-30.} Location-30~\cite{shokri2017membership} consists of 5,010 data samples with 446 binary features, and each sample represents the visiting record of locations of a user profile. Each binary feature represents whether a user visited a particular region or location type. The dataset has 30 classes with each class representing a different geosocial type. The classification task is to predict the geosocial type given the visiting record. We use the \textit{train\_test\_split} function from the \textit{sklearn}~\footnote{\url{https://scikit-learn.org/stable/}} toolkit to randomly select $80$\% samples as the training examples and use the remaining $20$\% as the testing examples.

\noindent \textbf{Purchase-100.} Purchase-100~\cite{shokri2017membership} consists of 197,324 data samples with 600 binary features, and each sample represents the purchase transactions of a customer. Each binary feature corresponds to a product and represents whether the customer has purchased it or not. The dataset has 100 classes with each class representing a different purchase style. The classification task is to assign customers to one of the 100 given classes. we use the \textit{train\_test\_split} function to randomly select $80$\% samples as the training examples and use the remaining $20$\% as the testing examples.

\paragraph{Target Models:} We consider Resnet-18~\cite{he2016deep} and fully-connected neural network as the target models. For the detailed architecture of Resnet-18, please refer to~\cite{he2016deep}. For fully-connected neural network, it has 2 hidden layers with 256 and 128 units, respectively. We use the ReLu as the activation function for the neurons in the hidden layers, softmax as the activation function in the output layer, cross-entropy as the loss function.

We train all the models with 150 epochs. We use SGD as the optimizer with momentum 0.9 and weight decay 5e-4 to train the models. For Resnet-18 trained on CIFAR-10, the learning rate is initialized as 0.1 with learning rate decayed by a factor of 10 at the 80th and 120th epoch. For FC model trained on Location-30, the learning rate is initialized as 0.1 with learning rate decayed by a factor of 10 at the 50th and 80th epoch. For FC model trained on Purchase-100, the learning rate is initialized as 0.01 with learning rate decayed by a factor of 10 at the 100th and 120th epoch. All experiments are implemented using Pytorch with a single GPU NVIDIA Tesla P40.

To distinguish from the benign testing sample, we call the testing sample that added with the trigger the malicious testing sample. In the experiment, when calculating the ASR of the target model, we use malicious testing samples. 

\renewcommand\thesubsection{A. 3}
\subsection{Examples of Marked CIFAR-10 Samples}\label{appendix::demo}
Fig.~\ref{fig::cifar_horse}, Fig.~\ref{fig::cifar_cat}, and Fig.~\ref{fig::cifar_ship} show the examples of the marked CIFAR-10 samples. The marking parameter $\bm{v}$ varies from 0 ({\em i.e.}, original sample) to 1. In the experiment, the trigger pattern is a $3\times3$ white square and is stamped in the bottom right of the images. For better visualization of the trigger, we use the $6\times6$ white square for demonstration in the three figures. As we can see, when $\bm{v}=0.3$, the trigger is difficult to be noticed when an unauthorized party does not have prior knowledge of the existence of the trigger.

\end{document}